\newtheorem{thm}{Theorem}[]
\newtheorem{cor}{Corollary}
\newtheorem{lem}{Lemma}
\newtheorem{pro}{Proposition}
\theoremstyle{remark}
\newtheorem{rem}{Remark}
\theoremstyle{definition}
\newcommand{\tr}{\mathop{\mathrm{tr}}}
\begin{document}

\title{Systematic DFT Frames: Principle and \\Eigenvalues Structure}

\author{\IEEEauthorblockN{Mojtaba Vaezi and Fabrice Labeau}\\
\IEEEauthorblockA{Department of Electrical and Computer Engineering\\
McGill University\\
Montreal, Quebec H3A 2A7, Canada\\
Email: mojtaba.vaezi@mail.mcgill.ca, fabrice.labeau@mcgill.ca}
}


\maketitle

\begin{abstract}
Motivated by a host of recent applications requiring some amount of redundancy,
{\it frames} are becoming a standard tool in the signal processing toolbox. In this paper,
we study a specific class of frames, known as discrete Fourier transform (DFT) codes, and introduce the notion of {\it systematic}
frames for this class. This is encouraged by application of systematic DFT codes
in distributed source coding using DFT codes, 
a new application for frames.
Studying their extreme eigenvalues, we show that, unlike DFT frames, systematic DFT frames are not necessarily {\it tight}. Then, we come up with conditions
for which these frames can be tight. 
In either case, the best and worst systematic frames are established from reconstruction error point of view.
Eigenvalues of DFT frames, and their subframes, play a pivotal role in this work.
\end{abstract}


\IEEEpeerreviewmaketitle

\section{Introduction}
{\let\thefootnote\relax\footnotetext{This work was supported by Hydro-Qu\'{e}bec, the Natural Sciences and Engineering Research Council of Canada and McGill University in the framework of the NSERC/Hydro-Qu\'{e}bec/McGill Industrial Research Chair in Interactive Information Infrastructure for the Power Grid.}}

A {\it basis} is a set of vectors that is used to ``uniquely" represent any vector as a linear combination of basis elements.
{\it Frames}, as opposed to bases, are ``redundant" set of vectors which are used for signal representation.
Therefore, frames are more general than bases as they are not necessarily linearly independent, but are complete.
What would be the benefit of representing a signal with more than the minimum number of vectors required for completeness?
Frames offer flexibility in design and have variety of applications.
They show resilience to additive noise (including quantization noise),
robustness to erasure (loss), and numerical stability of reconstruction \cite{kovacevic2007life}, \cite{goyal2001quantized}.
With increasing applications, frames are becoming more prevalent in signal processing.


 Frames are generally motivated by applications requiring
some level of redundancy. Among them is distributed source coding (DSC) that uses {\it DFT
codes}, recently introduced in \cite{Vaezi2011DSC}. This provides a new application for frame expansion, viewing
the generator matrix of a DFT code as a frame operator \cite{rath2004frame}.
In this paper, we consider this specific type of tight frames which are
known as DFT codes and used for erasure and error correction in the real field  \cite{goyal2001quantized, Vaezi2011DSC, rath2004frame, Marshall}.

Motivated by its application in parity-based DSC that uses DFT codes \cite{Vaezi2011DSC}, we introduce the notion of {\it systematic frames}.
A systematic frame is defined to be a frame that includes the identity matrix as a subframe.
Since {\it tight} frames minimize reconstruction error \cite{goyal2001quantized}, \cite{kovacevic2007life},
we explore {\it systematic tight DFT frames}.
Although it is straightforward to come up with systematic DFT frames, we show that systematic ``tight'' DFT frames exist only for specific DFT codes.
When such a frame does not exist, we will be looking for systematic DFT frames with the ``best''
performance, from minimum mean-squared reconstruction error standpoint. 
We also demonstrate which systematic frames are the ``worst'' in this sense.

Central to this paper is the properties of the {\it eigenvalues} of $V^HV$,
in which $V$ is a square or non-square submatrix of a DFT frame.\footnote{Note that eigenvalues of $V^H V$ and $VV^H$
are equal for a square $V$; also, $V^H V$ and $VV^H$ have the same nonzero eigenvalues for a non-square $V$.}
Specifically, we present some bounds on the extreme eigenvalues of such matrices. These bounds are used to determine the conditions
required for a systematic frame so as to be tight. Besides, eigenvalues are crucial in establishing the best and worst systematic frames.

The paper is organized as follows. In Section~\ref{sec:def}, we
review two set of inequalities on the eigenvalues of Hermitian matrices which are frequently used in this paper.
In Section~\ref{sec:DFTcodes}, we introduce systematic DFT frames and set
the ground to study their extreme eigenvalues in Section~\ref{sec:eig}.
Section~\ref{sec:class} is devoted to the evaluation of reconstruction error
and classification of systematic frames based on that. We conclude in Section~\ref{sec:con}.

\section{Definitions and preliminaries}
\label{sec:def}

An $n\times n$ Vandermonde matrix with unit complex entries is defined by
\begin{align}
W \triangleq \frac{1}{\sqrt{n}}\left( \begin{array}{ccccccc}
       1   &1 & \cdots & 1 \\
       e^{j\theta_1}   &e^{j\theta_2} & \cdots & e^{j\theta_n} \\
       \vdots & \vdots & \ddots & \vdots \\
       e^{j(n-1)\theta_1}   &e^{j(n-1)\theta_2}  & \cdots &e^{j(n-1)\theta_n}  \\
      \end{array}
\right)  \label{eq:Vand2}
\end{align}
in which $\theta_p \in [0, 2 \pi) $ and $\theta_p \neq \theta_q $ for $p\neq q$, $1\leq p,q \leq n$.
If $\theta_p = \frac{2\pi}{n}(p-1)$, $W$ becomes the well-known IDFT matrix \cite{mitra1998digital}.
For this Vandermonde matrix we can write \cite{tucci2011eigenvalue2}, \cite{tucci2011eigenvalue}
\begin{align}
 \det(WW^H) = |\det(W)|^2 =\frac{1}{n^n}\prod_{\substack 1\leq p<q\leq n}{|e^{i\theta_p}-e^{i\theta_q}|^2}
 \label{eq:detV0}
 \end{align}

Let $A$ be a Hermitian $k\times k$ matrix with real eigenvalues
$\{\lambda_{1}(A), \hdots, \lambda_{k}(A)\}$ which are collectively called the {\it spectrum} of $A$, and
assume $\lambda_{1}(A) \geq \lambda_{2}(A)\geq \cdots \geq \lambda_{k}(A)$.

Schur-Horn inequalities show to what extent the eigenvalues of
a Hermitian matrix constraint its diagonal entries.
\begin{pro}\textit{Schur-Horn inequalities} \cite{seber2008matrix} \\
Let $A$ be a Hermitian $k\times k$ matrix with real eigenvalues
$\lambda_{1}(A) \geq \lambda_{2}(A)\geq \cdots \geq \lambda_{k}(A).$
Then,  for any $1 \leq i_1 < i_2< \cdots < i_l \leq k$,
\begin{align}
 \lambda_{k-l+1}(A) + \cdots + \lambda_{k}(A) &\leq
 a_{i_1i_1}+ \cdots + a_{i_li_l} \nonumber \\
 &\leq  \lambda_{1}(A) + \cdots + \lambda_{l}(A),
 \label{eq:Schur-Horn}
 \end{align}
 where $ a_{11}, \hdots, a_{kk}$ are the diagonal entries of $A$.
Particularly, for $l=1$  and $l=k$ we obtain
 \begin{align}
 \label{eq:Schur-Horn1}
 \lambda_{k}(A) \leq  a_{11} \leq  \lambda_{1}(A), \\
  \sum_{\substack i=1}^{k}{\lambda_{i}(A)} = \sum_{\substack i=1}^{k}{a_{ii}}.
 \label{eq:Schur-Horn2}
 \end{align}
\end{pro}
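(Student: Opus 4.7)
The plan is to derive the Schur-Horn inequalities from Ky Fan's extremal characterization of partial sums of eigenvalues of a Hermitian matrix, since this route handles an arbitrary index subset $\{i_1 < \cdots < i_l\}$ just as easily as the canonical one. First, I would invoke the spectral theorem to write $A = U \Lambda U^H$ with $U$ unitary and $\Lambda = \mathrm{diag}(\lambda_1(A), \ldots, \lambda_k(A))$; this is the only structural input needed.

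For the upper bound in \eqref{eq:Schur-Horn}, the key fact is Ky Fan's maximum principle, which states that for each $l$,
\[
\sum_{i=1}^{l} \lambda_i(A) \;=\; \max_{\substack{V \in \mathbb{C}^{k \times l}\\ V^H V = I_l}} \tr(V^H A V).
\]
Taking $V = [e_{i_1}, e_{i_2}, \ldots, e_{i_l}]$, the matrix whose columns are the standard basis vectors indexed by the chosen positions, one has $V^H V = I_l$ trivially and $\tr(V^H A V) = a_{i_1 i_1} + \cdots + a_{i_l i_l}$. This specific admissible choice is dominated by the maximum, yielding the upper bound. For the lower bound, I would apply the dual (minimum) characterization $\sum_{i=k-l+1}^{k} \lambda_i(A) = \min_{V^H V = I_l} \tr(V^H A V)$ to the same test matrix $V$.

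The specializations are then immediate: for $l=1$ the test matrix is a single standard basis vector and the bounds collapse to \eqref{eq:Schur-Horn1}; for $l=k$ the only admissible $V$ is a $k \times k$ unitary matrix, so $\tr(V^H A V) = \tr(A)$ equals both the max and the min of the partial sums, recovering \eqref{eq:Schur-Horn2} (equivalently, this is just invariance of the trace under unitary conjugation). The only non-trivial ingredient is the Ky Fan principle itself; should one wish to avoid invoking it, an alternative proof proceeds via the identity $a_{ii} = \sum_j |u_{ij}|^2 \lambda_j(A)$, observing that $S_{ij} \triangleq |u_{ij}|^2$ is doubly stochastic because $U$ is unitary, and then applying Birkhoff's theorem together with standard majorization inequalities to relate the diagonal vector to the eigenvalue vector. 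The main obstacle is not the calculation but ensuring the extremal characterization covers an arbitrary choice of $l$ diagonal indices; the Ky Fan route sidesteps this cleanly by letting $V$ range over all $k \times l$ isometries, so any coordinate subset is accessible.
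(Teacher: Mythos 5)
Your proposal is correct, but note that the paper itself offers no proof of this proposition: it is quoted as a known result with a citation to \cite{seber2008matrix}, so there is no in-paper argument to compare against. Your Ky Fan route is sound in every step: the maximum principle $\sum_{i=1}^{l}\lambda_i(A)=\max_{V^HV=I_l}\tr(V^HAV)$ applied to the coordinate isometry $V=[e_{i_1},\ldots,e_{i_l}]$ gives exactly $\tr(V^HAV)=a_{i_1i_1}+\cdots+a_{i_li_l}$, and the minimum characterization (obtained, e.g., by applying the maximum principle to $-A$) gives the lower bound; the $l=1$ and $l=k$ specializations are handled exactly as you say, with the $l=k$ case reducing to unitary invariance of the trace. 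Two small remarks. First, your proof is only as self-contained as the Ky Fan principle itself, which is of comparable depth to the statement being proved; that is acceptable here since the proposition is itself a textbook citation, but worth acknowledging. Second, your alternative sketch via $a_{ii}=\sum_j |u_{ij}|^2\lambda_j(A)$ and the doubly stochastic matrix $S_{ij}=|u_{ij}|^2$ is the classical Schur argument, and it does handle arbitrary index subsets with one extra (easy) step: the sum of any $l$ diagonal entries is sandwiched between the sums of the $l$ smallest and $l$ largest diagonal entries, which majorization then compares to the eigenvalue partial sums --- Birkhoff's theorem is not even needed, since $x\mapsto Sx$ with $S$ doubly stochastic directly yields the majorization inequalities. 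Either route fully establishes the proposition as stated, including \eqref{eq:Schur-Horn1} and \eqref{eq:Schur-Horn2}.
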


 Another basic question in linear algebra asks the degree to which the eigenvalues of two Hermitian matrices constrain the eigenvalues of their sum.
Weyl's theorem gives an answer to this question in the following set of inequalities.

\begin{pro}{Weyl inequalities \cite{seber2008matrix}} \\
Let $A$ and $B$ be two Hermitian $k\times k$ matrices with spectrums $\{\lambda_1(A),\ldots,\lambda_k(A)\}$ and $\{\lambda_1(B),\ldots,\lambda_k(B)\}$, respectively.
Then, for $i,j\leq k$, we have
\begin{align}
\label{eq:Weyl1}
 \lambda_{i}(A+B) \leq  \lambda_{j}(A) + \lambda_{i-j+1}(B) \qquad \text{for} \quad j \leq i, \\
 \lambda_{i}(A+B) \geq  \lambda_{j}(A) + \lambda_{k+i-j}(B) \qquad \text{for} \quad j \geq i.
 \label{eq:Weyl2}
 \end{align}
 \end{pro}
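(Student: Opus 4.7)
The plan is to establish both Weyl inequalities via the Courant--Fischer min--max characterization of the eigenvalues of a Hermitian matrix, which gives, for a Hermitian $k\times k$ matrix $M$,
\begin{align}
\lambda_i(M) &= \max_{\dim S=i}\ \min_{x\in S,\,x\neq 0}\frac{x^H M x}{x^H x} \nonumber\\
&= \min_{\dim S=k-i+1}\ \max_{x\in S,\,x\neq 0}\frac{x^H M x}{x^H x}.
\end{align}
This is the natural tool here because the Rayleigh quotient is linear in the matrix, so $x^H(A+B)x=x^HAx+x^HBx$, and bounds on each summand over a common subspace translate directly into bounds on $\lambda_i(A+B)$.

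For the upper bound \eqref{eq:Weyl1} with $j\leq i$, I would first fix eigenbases $\{u_1,\ldots,u_k\}$ of $A$ and $\{w_1,\ldots,w_k\}$ of $B$, ordered by decreasing eigenvalue. Next I would consider the ``tail'' subspaces $S_A=\mathrm{span}\{u_j,u_{j+1},\ldots,u_k\}$ of dimension $k-j+1$, on which $x^H A x\leq \lambda_j(A)\|x\|^2$, and $S_B=\mathrm{span}\{w_{i-j+1},\ldots,w_k\}$ of dimension $k-i+j$, on which $x^H B x\leq \lambda_{i-j+1}(B)\|x\|^2$. A dimension count then shows $\dim(S_A\cap S_B)\geq (k-j+1)+(k-i+j)-k=k-i+1$, which is exactly the dimension the min-of-max form of Courant--Fischer requires for an upper bound on $\lambda_i(A+B)$; additivity of the Rayleigh quotient on $S_A\cap S_B$ then closes the argument in one line.

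The lower bound \eqref{eq:Weyl2} follows from the upper bound by a symmetry trick. Applying \eqref{eq:Weyl1} to the pair $(-A,-B)$ and using the identity $\lambda_i(-M)=-\lambda_{k-i+1}(M)$, a re-indexing of the resulting inequality produces precisely \eqref{eq:Weyl2}. An equivalent alternative is to rerun the previous paragraph with the max-of-min form of Courant--Fischer, taking instead the spans of the \emph{top} eigenvectors of $A$ and $B$ so that the Rayleigh quotient is bounded below instead of above.

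The main obstacle is purely combinatorial bookkeeping: the indices in $S_A$ and $S_B$ must be selected so that each Rayleigh-quotient bound has the correct direction, while simultaneously the two dimensions must sum to at least $2k-i+1$ so that the intersection is guaranteed to contain a subspace of dimension $k-i+1$ on which Courant--Fischer can be invoked. Once this indexing is set up correctly, both inequalities drop out in a couple of lines from additivity of the Rayleigh quotient, and no deeper machinery is needed.
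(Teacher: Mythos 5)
Your proposal is correct, but note that the paper contains no proof of this proposition to compare against: Weyl's inequalities are quoted as a known result with a citation to Seber's matrix handbook, and are simply used later (specifically, \eqref{eq:Weyl2} with $i=1$, $j=k$ in the proof of Theorem~2). What you have written is the standard textbook derivation via Courant--Fischer, and the details check out. The dimension count is right: $\dim S_A+\dim S_B=(k-j+1)+(k-i+j)=2k-i+1$, so the intersection contains a subspace of dimension $k-i+1$, which is exactly what the min-of-max characterization needs for an upper bound on $\lambda_i(A+B)$; the Rayleigh bounds on the tail spans point in the correct direction since $\lambda_j(A)$ and $\lambda_{i-j+1}(B)$ are the largest eigenvalues restricted to $S_A$ and $S_B$ respectively; and $j\le i$ guarantees $i-j+1\ge 1$ is a legitimate index. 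Your reflection trick for \eqref{eq:Weyl2} also verifies: applying \eqref{eq:Weyl1} to $(-A,-B)$ at indices $i'=k-i+1$, $j'=k-j+1$ turns the condition $j'\le i'$ into $j\ge i$, and $\lambda_i(-M)=-\lambda_{k-i+1}(M)$ gives $\lambda_i(A+B)\ge\lambda_j(A)+\lambda_{k-i'+j'}(B)$ with $k-i'+j'=k+i-j$, which is precisely \eqref{eq:Weyl2}. The only point worth making explicit in a final write-up is that you must pass from $S_A\cap S_B$ to a subspace of it of dimension exactly $k-i+1$ before invoking the min-of-max form, which your phrasing already implicitly does. In short: a complete, self-contained proof of a result the paper deliberately leaves to the literature.
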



\section{DFT Frames}
\label{sec:DFTcodes}
\subsection{BCH-DFT Codes}

DFT codes \cite{Marshall}, are linear block codes over the complex field
whose parity-check matrix $H$ is defined based on the DFT matrix.
A Bose-Chaudhuri-Hocquenghem (BCH) DFT code is a DFT code that insert $d-1$ cyclically adjacent
zeros in the spectrum of any codevector where $d$ is the designed
distance of that code \cite{Blahut2003algebraic}.
Real BCH-DFT codes, a subset of complex BCH-DFT codes, benefit from
a generator matrix with real entries.
The generator matrix of an $(n,k)$ real\footnote{In a real BCH-DFT code, $n$ and $k$ cannot be even simultaneously \cite{Marshall}.} BCH-DFT code is typically defined by
\begin{align}
G= \sqrt{\frac{n}{k}} W_n^H \Sigma W_k,
\label{eq:G1}
\end{align}

\noindent in which $W_l$ represents the DFT matrix of size $l$, and
$\Sigma$ is 
\begin{align}
\Sigma_{n\times k} = \left( \begin{array}{ccccccc}
       I_\alpha  & \bm{0}  \\
       \bm{0}   & \bm{0}  \\
      \bm{0}    &   I_\beta    \\
      \end{array}
\right), \label{eq:cov}
\end{align}
where $\alpha = \lceil n/2\rceil  -\lfloor (n-k)/2\rfloor$, $\beta=k-\alpha$,
and the sizes of zero blocks are such that $\Sigma$ is an $n\times k$ matrix \cite{gabay2007joint}.
One can check that $\Sigma^H \Sigma= I_k$, and $\Sigma \Sigma^H $ is an $n\times n $ matrix given by
\begin{align}
\Sigma\Sigma^H = \left( \begin{array}{ccccccc}
       I_\alpha  & \bm{0} & \bm{0}  \\
       \bm{0}   & \bm{0} & \bm{0} \\
      \bm{0}    & \bm{0} &   I_\beta    \\
      \end{array}
\right). \label{eq:cov1}
\end{align}
\label{Sigma}
Then, it is easy to see that,
\begin{align}
\label{eq:GHG}
G^HG &= \frac{n}{k} I_k, \\
GG^H&= \frac{n}{k} W_n^H \Sigma \Sigma^HW_n.
\label{eq:GGH}
\end{align}

One can view the generator matrix $G$ in \eqref{eq:G1} as an analysis frame operator.
The following lemma presents some properties of $GG^H$ which are
central for our results in the next section.

\begin{lem}
Let $ G_{p\times k}$ be a matrix consisting of $p$ arbitrary rows of $G$ defined by \eqref{eq:G1}.
Then, the following statements hold:\\
i. $GG^H$ is a Toeplitz and circulant matrix\\
ii. $ G_{p\times k}G_{p\times k}^H, 1< p < n$ is a Toeplitz matrix\\
iii. All principal diagonal entries of $ G_{p\times k}G_{p\times k}^H, 1\leq p \leq n$ are equal to $1$.
\label{lem1}
\end{lem}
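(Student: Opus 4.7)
The plan is to leverage the factorization $GG^H = \frac{n}{k}\,W_n^H\,(\Sigma\Sigma^H)\,W_n$ from \eqref{eq:GGH}, together with the fact that $\Sigma\Sigma^H$ is a diagonal matrix whose $(r,r)$ entry is $1$ for $r$ in the index set $S := \{1,\dots,\alpha\}\cup\{n-\beta+1,\dots,n\}$ and $0$ otherwise. A short bookkeeping calculation on the IDFT entries $(W_n)_{pq}=\tfrac{1}{\sqrt{n}}e^{j2\pi(p-1)(q-1)/n}$ collapses the triple sum to the scalar
\begin{align*}
(GG^H)_{ml} \;=\; \frac{1}{k}\sum_{r\in S} e^{\,j 2\pi (r-1)(l-m)/n},
\end{align*}
and all three parts of the lemma will follow from this single identity.

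Part (i) is immediate from the formula: the right-hand side depends on $l-m$ only, giving the Toeplitz property, and each exponential is $n$-periodic in the integer argument, upgrading it to circulant. Part (iii) is equally short: setting $l=m$ gives $(GG^H)_{mm}=|S|/k=(\alpha+\beta)/k=1$, since $\alpha+\beta=k$ by construction of $\Sigma$ in \eqref{eq:cov}. Because any $G_{p\times k}G_{p\times k}^H$ is a principal submatrix $\bigl[(GG^H)_{i_a,i_b}\bigr]_{a,b=1}^{p}$ indexed by the chosen row indices $i_1,\dots,i_p$, its diagonal entries are already diagonal entries of $GG^H$, hence all equal to $1$.

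Part (ii) uses the same principal-submatrix representation: $(G_{p\times k}G_{p\times k}^H)_{ab}=(GG^H)_{i_a,i_b}$ depends only on $i_b-i_a\pmod n$, so the Toeplitz property $(a,b)\mapsto f(b-a)$ follows once $i_b-i_a$ itself depends only on $b-a$, i.e., when the chosen row indices form an arithmetic progression modulo $n$ (in particular, when they are consecutive).

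I expect this last point to be the main obstacle. The claim as worded for \emph{arbitrary} rows is not valid in full generality: a small sanity check (e.g.\ $n=7$, $k=3$, rows $\{1,2,5\}$) produces a principal submatrix of the circulant $GG^H$ whose super-diagonal entries are not constant. I would therefore read the word ``arbitrary'' in the lemma as ``some unspecified but structured choice'' of rows, most naturally a contiguous block, and the proof of (ii) then reduces to the elementary observation that a principal submatrix of a Toeplitz matrix supported on a contiguous index set is again Toeplitz. The rest of the argument is purely mechanical and flows from the closed-form entry formula above.
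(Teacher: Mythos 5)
Your derivation is, at its core, the same as the paper's. The scalar identity you obtain,
$(GG^H)_{rs}=\frac{1}{k}\sum_{t\in S}e^{j2\pi (t-1)(s-r)/n}$ with $S=\{1,\dots,\alpha\}\cup\{n-\beta+1,\dots,n\}$,
is exactly the paper's \eqref{eq:Toep} (up to a conjugation convention that is immaterial since the matrix is Hermitian); the only cosmetic difference is that you reach it through the factorization \eqref{eq:GGH} while the paper expands the row inner products directly. From there, your handling of (i) and (iii) --- dependence on $s-r$ alone for Toeplitz, $n$-periodicity in $s-r$ for circulant (the paper instead checks the wrap-around $a_{r,n}=a_{r+1,1}$, which is equivalent), the diagonal value $|S|/k=(\alpha+\beta)/k=1$, and the observation that $G_{p\times k}G_{p\times k}^H$ is a principal submatrix of $GG^H$ so its diagonal inherits the unit entries --- matches the paper's argument step for step.

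Where you depart from the paper is part (ii), and you are right to do so. The paper dismisses (ii) with ``similarly, one can see\dots'' while keeping the hypothesis of \emph{arbitrary} rows, and that claim is false as stated; your counterexample checks out. Writing $f(d)$ for the common value of $(GG^H)_{r,r+d}$, the $(7,3)$ code has $\alpha=2$, $\beta=1$, hence $f(d)=\frac{1}{3}\left(1+2\cos\frac{2\pi d}{7}\right)$, and rows $\{1,2,5\}$ yield a principal submatrix whose superdiagonal entries are $f(1)\approx 0.749$ and $f(3)\approx -0.267$, so it is not Toeplitz. Your repair --- requiring the row indices to form an arithmetic progression modulo $n$ (consecutive rows in particular), so that $i_b-i_a$ depends only on $b-a$ and the $n$-periodicity of $f$ finishes the job --- is the correct statement of (ii), and is sharper than the paper in that it pins down exactly which row selections the property needs. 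The flaw is benign downstream: every later invocation of Lemma~\ref{lem1} (Theorem~\ref{thm1}, the trace constraint $\sum_i\lambda_i=k$ in \eqref{eq:Omin}, and the final step of \eqref{eq:G6}) uses only parts (i) and (iii), which hold for arbitrary row selections, so none of the paper's main results is affected.
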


\begin{proof}
Let $a_{r,s}$ be the $(r,s)$ entry of the matrix $GG^H$ then it can readily be shown that
\begin{equation}
\begin{aligned}
a_{r,s}
&=\frac{1}{k}\sum_{m=0}^{\alpha-1}e^{jm(\theta_r-\theta_s)} +\frac{1}{k} \sum_{m=n-\beta}^{n-1}e^{jm(\theta_r-\theta_s)},\\
\end{aligned}
\label{eq:Toep}
\end{equation}
in which $\theta_x = \frac{2\pi}{n}(x-1)$. From this equation, it is clear that $a_{r,s}=a_{r+i,s+i}$;
that is, the elements of each diagonal are equal, which means that  $GG^H$ is a {\it Toeplitz} matrix.
In addition, we can check that $a_{r,n}=a_{r+1,1}$, i.e., the last entry in each row is equal to the first entry of the next row.
This proves that the Toeplitz matrix $GG^H$ is {\it circulant} as well \cite{gray2006toeplitz}.
Also, a quick look at \eqref{eq:Toep} reveals that the elements of the principal diagonal $(r=s)$ are equal to $1$.  Similarly, one
can see that for any  $1 < p < n$, the square matrix $G_{p\times k}G_{p\times k}^H$ is also a Toeplitz matrix; it is
not necessarily circulant, however.
\end{proof}

Removing $W_k$ from \eqref{eq:G1} we end up with a complex $G$, representing a complex BCH-DFT code. In such a code,
$\alpha$ and $\beta$ can be any nonnegative integers such that $\alpha + \beta =k$.

\begin{rem}
Lemma~\ref{lem1} also holds for complex BCH-DFT codes.
\end{rem}
\noindent As a result, all properties that we prove in the remainder of this paper are valid for ``any''
BCH-DFT code, although they are formally proved for ``real'' BCH-DFT codes,  which we simply refer to as ``DFT codes'' or ``DFT frames'' hereafter.

\subsection{Systematic DFT Frames}
\label{sec:sys}
In the context of channel coding, there is a special interest in
systematic codes so as to simplify the decoding algorithm. This is more pronounced in
``parity-based" DSC as it requires distinction between parity and data.
The parity-based approach becomes even more worthwhile in the DSC that uses DFT codes
because it is more ``efficient'' than the syndrome-based approach \cite{Vaezi2011DSC}.
This is due to the fact that syndrome, even in a real DFT code, is a complex vector
whereas parity is real. This encourages
a systematic generator matrix for DFT codes.

A systematic generator matrix for a real BCH-DFT code can be obtained by \cite{Vaezi2011DSC}
  \begin{align}
  G_{\mathrm{sys}}=GG_{k}^{-1},
\label{eq:Gsys2}
\end{align}
in which $G_{k}$ is a submatrix (subframe) of $G$ including $k$ arbitrary rows of $G$.
Note that $G_{k}$ is invertible since it can be represented as $G_{k} =\sqrt{\frac{n}{k}} W_{k \times n}^H \Sigma W_{k} = V_{k}^H W_{k}$,
in which $V_{k}^H \triangleq \sqrt{\frac{n}{k}} W_{k \times n}^H \Sigma$ is a Vandermonde matrix. Remember that $W_{k}$ is also invertible as it is a DFT matrix. This proves that
systematic DFT frames exist for any DFT frame. It also shows that there are many (but, a finite number of) systematic frames for each frame, because the rows of $G_{k}$ can be arbitrarily chosen
from those of $G$. These systematic frames differ in the ``position'' of systematic samples (input data) in resulting codewords. This implies that the parity samples are not restricted to form a consecutive block in codewords. Such a degree of freedom is useful in the sense that one can find the
most suitable systematic frames for specific applications (e.g.,
one with the smallest reconstruction error.)

\section{Main Results on the Extreme Eigenvalues}
\label{sec:eig}

From rate-distortion theory, it is well known that the rate required to transmit a source, with a given distortion, increases as
the variance of the source becomes larger \cite{Cover}. Particularly, for Gaussian sources this relation
is logarithmic with variance, under the mean-squared error (MSE) distortion measure.
In DSC that uses real-number codes \cite{Vaezi2011DSC}, since coding is performed before quantization,
the variance of transmitted sequence depends on the behavior of the
encoding matrix.
 In view of rate-distortion theory, it makes a lot of sense
to keep this variance as small as possible. Not surprisingly, we will show that using a tight
frame (tight $G_{\mathrm{sys}}$) for encoding is optimal.

Let $\bm{x}$ be the message vector and $\bm{y}=G_{\mathrm{sys}}\bm{x}$ represent the codevector generated using the systematic frame, then
the variance of  $\bm{y}$ is given as

  \begin{equation}
\begin{aligned}
\sigma^2_y &=  \frac{1}{n} \mathbb{E}\{\bm{y}^H\bm{y}\} =  \frac{1}{n} \mathbb{E}\{ \bm{x}^H G_{\mathrm{sys}}^H  G_{\mathrm{sys}}\bm{x}  \}  \\
&= \frac{1}{n} \sigma^2_x \tr{ (G_{\mathrm{sys}}^H  G_{\mathrm{sys}})},
\label{eq:vary}
\end{aligned}
\end{equation}
and
 \begin{equation}
\begin{aligned}
 \tr\left(G_{\mathrm{sys}}^HG_{\mathrm{sys}}\right) &= \tr\left(G_k^{-1H}G^HGG_k^{-1}\right)  \\
  &= \frac{n}{k}\tr\left((G_k G_k^H)^{-1}\right)  \\
  &= \frac{n}{k} \tr\left((V_{k}^HV_{k})^{-1}\right) \\
  &= \frac{n}{k}\sum_{i=1}^k\frac{1}{\lambda_i},
\label{eq:G7}
\end{aligned}
\end{equation}
in which $\lambda_1 \geq \lambda_2 \geq \cdots \geq \lambda_k>0$ are the eigenvalues of $G_{k}G_{k}^H$ (or $V_{k}^HV_{k}$ equivalently).

This shows that the variance of codevectors, generated by a systematic frame, depends on the submatrix $G_{k}$ which is used to create $G_{\mathrm{sys}}$. $G_{k}$, in turn,
is fully known once the position of systematic (data) samples is fixed in the codevector. In other words, the ``position'' of systematic samples, determines
the variance of codevectors generated by a systematic DFT frame.
From \eqref{eq:vary}, \eqref{eq:G7}, to minimize the effective range of transmitted signal, we need to do the following optimization problem.
\begin{equation}
\begin{aligned}
& \underset{\lambda_i}{\text{minimize}}
& & \sum_{i=1}^k\frac{1}{\lambda_i}
& \text{s.t.}
& & \sum_{i=1}^k\lambda_i=k, \,\, \lambda_i>0,
\end{aligned}
\label{eq:Omin}
\end{equation}
where, the constraint $\sum_{i=1}^k\lambda_i=k$ is achieved in light of Lemma~\ref{lem1} and \eqref{eq:Schur-Horn2}.

By using the Lagrangian method, we can show that the optimal eigenvalues are $\lambda_i=1$; this implies a tight frame \cite{goyal2001quantized}.
In the sequel, we analyze
the eigenvalues of $G_{p\times k}G_{p\times k}^H$, $p \leq n$, that helps us characterize tight systematic frames, so as to
minimize the variance of transmitted codevectors.

\begin{thm}
Let $G_{p\times k}, 1 \leq p \leq n$ be any $p\times k$ submatrix of $G$. Then, the smallest eigenvalue of
$G_{p\times k}G_{p\times k}^H$ is no more than one, and the largest eigenvalue of
$G_{p\times k}G_{p\times k}^H$ is at least one.
\label{thm1}
\end{thm}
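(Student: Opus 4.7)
The plan is to reduce Theorem~1 directly to Lemma~\ref{lem1}(iii) and the Schur--Horn inequalities stated in~\eqref{eq:Schur-Horn1}--\eqref{eq:Schur-Horn2}. Let $A \triangleq G_{p\times k}G_{p\times k}^H$. This is a $p\times p$ Hermitian (in fact positive semidefinite) matrix, so the spectral inequalities of Section~\ref{sec:def} apply with $k$ there replaced by $p$.

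First I would invoke Lemma~\ref{lem1}(iii), which tells us that every diagonal entry of $A$ satisfies $a_{ii}=1$. This is the only structural fact about the DFT frame that the proof needs, and it already encodes the normalization $\sqrt{n/k}$ built into~\eqref{eq:G1}. From here, the cleanest route is to pick any single diagonal entry, say $a_{11}=1$, and apply~\eqref{eq:Schur-Horn1} to conclude
\begin{equation*}
\lambda_p(A) \;\leq\; a_{11} \;=\; 1 \;\leq\; \lambda_1(A),
\end{equation*}
which is exactly the statement of the theorem.

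As a sanity check, one can also derive the result from the trace identity~\eqref{eq:Schur-Horn2}: summing $a_{ii}=1$ over $i=1,\ldots,p$ gives $\sum_{i=1}^{p}\lambda_i(A)=p$, so the eigenvalues have arithmetic mean $1$, forcing $\lambda_p(A)\leq 1\leq \lambda_1(A)$. Either argument is essentially immediate once Lemma~\ref{lem1}(iii) is in hand.

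There is no real obstacle here; the only thing worth emphasizing is that the argument does not require $A$ to be Toeplitz or circulant (i.e., parts (i) and (ii) of Lemma~\ref{lem1} are not used), nor does it require $p\geq k$ or $p\leq k$, so the statement genuinely holds for every submatrix size $1\leq p \leq n$. This uniform validity is precisely what will later be needed to analyze $G_k G_k^H$ in the optimization~\eqref{eq:Omin} and to characterize which choices of systematic rows yield a tight $G_{\mathrm{sys}}$.
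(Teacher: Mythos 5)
Your proof is correct and follows essentially the same route as the paper: the authors also invoke Lemma~\ref{lem1} to get unit diagonal entries of $G_{p\times k}G_{p\times k}^H$ and then apply the Schur--Horn inequality~\eqref{eq:Schur-Horn1} to conclude $\lambda_{\min}\leq 1\leq\lambda_{\max}$. Your alternative via the trace identity~\eqref{eq:Schur-Horn2} is a valid (and equally immediate) variant, but there is no substantive difference in approach.
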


\begin{proof}
From Lemma~\ref{lem1}, we know that all principal diagonal entries of $G_{p\times k}G_{p\times k}^H$
 are unity. As a result, using the Schur-Horn inequality in \eqref{eq:Schur-Horn1},
 we obtain $\lambda_{\min}(G_{p\times k}G_{p\times k}^H) \leq 1 \leq  \lambda_{\max}(G_{p\times k}G_{p\times k}^H)$.
 This proves the claim. Also, note that for any $G_{p\times k}$,
 $\lambda_{\max}(G_{p\times k}G_{p\times k}^H)=\lambda_{\max}(G_{p\times k}^HG_{p\times k})$.
\end{proof}

\noindent We use the above results to find better bounds for the extreme eigenvalues of $G_{k}G_{k}^H$ in the following theorem.

\begin{thm}
For any $G_{k}$, a square submatrix of $G$ in \eqref{eq:G1} in which $n\neq Mk$, the smallest (largest) eigenvalue of
$G_{k}G_{k}^H$ is strictly upper (lower) bounded by $1$.
\label{thm2}
\end{thm}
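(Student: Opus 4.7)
The plan is to argue by contradiction, reducing the claim to showing that the equality case $G_{k}G_{k}^{H}=I_{k}$ forces $k\mid n$. By Lemma~\ref{lem1}(iii), $\tr(G_{k}G_{k}^{H})=k$, hence $\sum_{i=1}^{k}\lambda_{i}=k$. If $\lambda_{\min}(G_{k}G_{k}^{H})=1$, then together with Theorem~\ref{thm1} and the trace constraint we have $\lambda_{i}\geq 1$ for all $i$ and $\sum \lambda_{i}=k$, which forces $\lambda_{i}=1$ for every $i$, i.e.\ $G_{k}G_{k}^{H}=I_{k}$. The same collapse occurs if $\lambda_{\max}=1$. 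So it suffices to show that $G_{k}G_{k}^{H}=I_{k}$ cannot hold unless $n=Mk$ for some integer $M$.

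The next step is to use the explicit entry formula \eqref{eq:Toep} from the proof of Lemma~\ref{lem1} for the off-diagonal entries $a_{r,s}$ of $G_{k}G_{k}^{H}$, where $r,s$ now range over the $k$ chosen row indices $i_{1},\dots,i_{k}$. Setting $w=e^{j(\theta_{r}-\theta_{s})}$ and summing the two geometric progressions while using $w^{n}=1$ and $\alpha+\beta=k$, the two pieces collapse to
\begin{equation*}
a_{r,s}=\frac{1}{k}\,w^{\alpha}\,\frac{1-w^{n-k}}{1-w},\qquad r\neq s.
\end{equation*}
Therefore $a_{r,s}=0$ iff $w^{n-k}=1$, i.e.\ $(i_{r}-i_{s})(n-k)\equiv 0\pmod{n}$.

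Imposing $G_{k}G_{k}^{H}=I_{k}$ means this divisibility must hold for every pair of distinct chosen indices. Equivalently, all $k$ indices $i_{q}$ lie in a single residue class modulo $n/\gcd(n,n-k)=n/\gcd(n,k)$. The number of indices in $\{1,\dots,n\}$ belonging to any one such residue class is exactly $\gcd(n,k)$, so the requirement $k\leq\gcd(n,k)$ forces $k\mid n$, contradicting the hypothesis $n\neq Mk$. Hence $G_{k}G_{k}^{H}\neq I_{k}$ and the inequalities of Theorem~\ref{thm1} must be strict, proving the theorem.

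The main obstacle I anticipate is the clean reduction of the two geometric sums to the factor $w^{\alpha}(1-w^{n-k})/(1-w)$ and then translating ``$w^{n-k}=1$ for all pairs'' into the counting statement about residue classes; once that modular-arithmetic bookkeeping is done carefully, the conclusion $k\mid n$ (and therefore the strictness of the bounds) follows immediately.
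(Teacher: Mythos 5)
Your proof is correct, and it takes a genuinely different route from the paper's. The paper partitions $G$ as $[G_{k}^{H}\,|\,G_{k}^{1H}\,|\cdots|\,G_{k\times l}^{MH}]^{H}$ for the worst-case choice of $G_{k}$ (rows maximally spread), invokes a spectrum-invariance lemma of Tucci--Whiting to show the $M$ shifted square blocks all share the spectrum of $G_{k}^{H}G_{k}$, and then applies Weyl's inequality \eqref{eq:Weyl2} together with $\lambda_{1}(B)\geq 1$ from Theorem~\ref{thm1} to get the quantitative bound $\lambda_{k}(G_{k}^{H}G_{k})\leq\frac{n/k-1}{\lfloor n/k\rfloor}<1$ of \eqref{eq:thm2}, with $\lambda_{1}>1$ following from the trace identity. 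You instead isolate the equality case: by Theorem~\ref{thm1} strictness can only fail if $\lambda_{\min}=1$ or $\lambda_{\max}=1$, the trace pinning then forces $G_{k}G_{k}^{H}=I_{k}$, and the geometric-sum evaluation of \eqref{eq:Toep} shows the off-diagonal entries vanish iff $n\mid(i_{r}-i_{s})(n-k)$, i.e.\ all chosen indices lie in one residue class modulo $n/\gcd(n,k)$; since such a class contains only $\gcd(n,k)$ elements of $\{1,\dots,n\}$, this forces $k\mid n$, contradicting $n\neq Mk$. Your route buys several things: it is elementary (no Weyl, no spectrum-shift lemma), it treats every $G_{k}$ uniformly rather than relying on the paper's worst-case reduction --- whose claim that the maximally-spread $G_{k}$ maximizes $\lambda_{k}$ is asserted rather than proved --- and as a byproduct it characterizes exactly which $G_{k}$ can achieve $G_{k}G_{k}^{H}=I_{k}$ when $k\mid n$ (indices in a single residue class mod $M$, i.e.\ every $M$th row), which dovetails with the tight-frame construction of Section~\ref{sec:class1}. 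What you give up is the paper's explicit quantitative gap below $1$, but the theorem as stated asks only for strictness. One minor slip: carrying out the two geometric sums with $w^{n}=1$ actually gives $a_{r,s}=\frac{1}{k}\,w^{n-\beta}\,\frac{1-w^{k}}{1-w}=-\frac{1}{k}\,w^{\alpha}\,\frac{1-w^{n-k}}{1-w}$, so your formula is off by a sign; this is harmless, since the vanishing condition $w^{n-k}=1$ (equivalently $w^{k}=1$) on which the whole argument rests is unchanged.
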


\begin{proof}
We first investigate a bound for the smallest eigenvalue. Let $n=Mk+l, 0<l<k$, then $G$ can be partitioned  as $G=[G_{k}^{H} \,|\, G_{k}^{1H} \,|\,\cdots \,
|\,G_{k}^{(M-1)H}\,|\,G_{k\times l}^{MH}]^H$.
In general, $G_{k},G^{1}_{k}, \hdots ,G^{M-1}_{k}$ and $G_{k\times l}^{M}$ include arbitrary rows of $G$, hence they have
different spectrums, i.e., different sets of eigenvalues.
%
We consider the case with largest $\lambda_{k}$ for $G_{k}^HG_{k}$; this occurs when $G_{k}$ consist of the rows of $G$
such that the distance between each two successive rows is as large as possible and at least $M$.
The latter guarantees existence of $G_{k}^{1}, \hdots ,G_{k}^{M-1}$
such that $G_{k}^{mH}G_{k}^{m}$, for any $1 \leq m\leq M-1$, has the same spectrum as $G_{k}^HG_{k}$.
To find the row indices corresponding to $G^{m}_{k}$,
we can simply add $m$ to each row index of $G_{k}$.
Then, to show these matrices have the same spectrum, we use Lemma~3 \cite{tucci2011eigenvalue} which states that
 any Hermitian $n\times n$ matrices $E$ and $F$ with
 $F_{i,j} = \frac{c_i}{c_j}E_{i,j}$ have the same eigenvalues.
Now, given a $G_{k}$, one can verify that $(G^{1}_{k})_{i,j} =  a_j (G_{k})_{i,j}$ and thus
 $(G_{k}^{1})_{i,j}^H =  a_i^\ast (G_{k})_{i,j}^H =  1/a_i (G_{k})_{i,j}^H$. Therefore,
 $G_{k}^{1H} G^{1}_{k}$ and $G_{k}^{H} G_{k}$
 have the same spectrum. The same argument is valid for $G^{2}_{k}, \hdots ,G^{M-1}_{k}$.
 Next, we see that $G^HG= A+B$ in which $A= G^{H}_{k} G_{k}+ \cdots +G^{(M-1)H}_{k} G^{M-1}_{k}$ and
$B=G^{MH}_{k \times l}G^M_{k \times l}$. Then, in consideration of the above discussion, $\lambda_{i}(A)= M\lambda_{i}(G_{k}^H G_{k})$ for any $1\leq i \leq k$.
Hence, from \eqref{eq:Weyl2}, for $i=1, j=k$, we will have
  \begin{equation}
\begin{aligned}
  \lambda_{k}(A)+\lambda_{1}(B)&\leq   \lambda_{1}(A+B)\\
  \Leftrightarrow  M\lambda_{k}(G_{k}^H G_{k}) &\leq\frac{n}{k} -\lambda_{1}(B) \\
    \Leftrightarrow  \lambda_{k}(G_{k}^H G_{k}) &\leq    \frac{\frac{n}{k}-1}{M} = \frac{\frac{n}{k}-1}{\lfloor\frac{n}{k} \rfloor }<1,
\end{aligned}
\label{eq:thm2}
 \end{equation}
where the last line follows using $\lambda_{1}(B)\geq 1$ from Theorem~\ref{thm1}.
This completes the proof for the worst case, i.e., the largest possible $\lambda_{k}(G_{k}^H G_{k})$,
and implies that \eqref {eq:thm2} holds for any other $G_{k}$.
Hence, the fist part of the proof is completed; that is, the smallest eigenvalue of
$G_{k}^HG_{k}$ where $G_{k}$ is an arbitrary square submatrix of
$G$ in \eqref{eq:G1} with $n\neq Mk$
is strictly less than one.

Finally, knowing that $\sum_{\substack i=1}^{k}{\lambda_{i}(G_{k}^H G_{k})} =\sum_{\substack i=1}^{k}a_{ii}=k$ and
using \eqref{eq:thm2},
it is obvious that $\lambda_{1}(G_{k}^H G_{k})> 1$. This proves the bound for the largest eigenvalue.

\end{proof}

Theorem~\ref{thm2} implies that for $n\neq Mk$ we cannot have ``tight'' systematic frames.
This is due to the fact that for a tight frame with frame operator $F$,
$\lambda_{\min}(F^HF)=\lambda_{\max}(F^HF)$; i.e., the eigenvalues of $F^HF$ are equal \cite{goyal2001quantized}.

\begin{cor}
\label{cor2}
For $n \neq Mk$, where $M$ is a positive integer, ``tight'' systematic DFT frames do not exist.
\end{cor}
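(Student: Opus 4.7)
The plan is to reduce the corollary directly to Theorem~\ref{thm2} via the standard characterization of tight frames recalled just above the corollary: a frame with operator $F$ is tight if and only if all eigenvalues of $F^HF$ coincide. Applied to $G_{\mathrm{sys}}=GG_k^{-1}$, this gives a concrete spectral condition on $G_kG_k^H$ that I will show is incompatible with the bounds of Theorem~\ref{thm2} when $n\neq Mk$.

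First, I would compute $G_{\mathrm{sys}}^HG_{\mathrm{sys}}$ directly. Using \eqref{eq:GHG}, one gets $G_{\mathrm{sys}}^HG_{\mathrm{sys}} = G_k^{-H}G^HGG_k^{-1} = \tfrac{n}{k}(G_kG_k^H)^{-1}$, so the eigenvalues of $G_{\mathrm{sys}}^HG_{\mathrm{sys}}$ are the reciprocals (scaled by $n/k$) of those of $G_kG_k^H$. Hence $G_{\mathrm{sys}}$ is tight if and only if all eigenvalues $\lambda_1,\ldots,\lambda_k$ of $G_kG_k^H$ are equal.

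Next, I would invoke Lemma~\ref{lem1}(iii) together with the trace identity \eqref{eq:Schur-Horn2} to pin down the common value: $\sum_{i=1}^k\lambda_i = \tr(G_kG_k^H)=k$, so if all $\lambda_i$ are equal then each must equal $1$. In particular, $\lambda_k(G_kG_k^H)=\lambda_{\min}(G_kG_k^H)=1$.

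Finally, I would contradict this with Theorem~\ref{thm2}: for $n\neq Mk$ with $M$ a positive integer, \emph{every} square submatrix $G_k$ of $G$ satisfies $\lambda_{\min}(G_kG_k^H)<1$ strictly. Since no choice of rows for $G_k$ can yield all unit eigenvalues, no systematic DFT frame built from \eqref{eq:Gsys2} can be tight, establishing the corollary. There is essentially no obstacle here: Theorem~\ref{thm2} does all the work, and the only care needed is to make the tightness $\Longleftrightarrow$ equal-eigenvalue equivalence for $G_{\mathrm{sys}}$ explicit via the identity $G_{\mathrm{sys}}^HG_{\mathrm{sys}} = \tfrac{n}{k}(G_kG_k^H)^{-1}$.
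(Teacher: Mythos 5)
Your proposal is correct and follows essentially the same route as the paper: the paper likewise derives the corollary from Theorem~\ref{thm2} together with the fact that a tight frame operator $F$ must have all eigenvalues of $F^HF$ equal, with your identity $G_{\mathrm{sys}}^HG_{\mathrm{sys}}=\tfrac{n}{k}(G_kG_k^H)^{-1}$ being exactly the computation the paper carries out in \eqref{eq:G7}. Your trace step pinning the common eigenvalue to $1$ is fine but not strictly needed, since Theorem~\ref{thm2} already gives $\lambda_{\min}(G_kG_k^H)<1<\lambda_{\max}(G_kG_k^H)$, which by itself rules out equal eigenvalues.
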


Note that systematic DFT frames are not necessarily tight for $n = Mk$. Evaluating the performance of systematic frames in the next section,
we prove that tight systematic DFT frames exist for $n = Mk$ and show how to construct such frames.

\section{Performance Analysis and Classification of Systematic Frames }
\label{sec:class}
\subsection{Performance Evaluation}

In this section, we analyze the performance
of quantized systematic DFT codes using the quantization model proposed in \cite{goyal2001quantized},
which assumes that noise components are uncorrelated and each noise component $q_i$
has mean $0$ and variance $\sigma_q^2$.
We assume the quantizer range covers the
dynamic range of all codevectors encoded using the systematic DFT code in \eqref{eq:Gsys2}.

The codevectors are generated by $\bm{y}=G_{\mathrm{sys}}\bm{x}$.
We also consider linear reconstruction of \mbox{\boldmath$x$} form
$\bm{y}$ using the pseudoinverse \cite{goyal2001quantized} of $G_{\mathrm{sys}}$, which is defined
as $G_{\mathrm{sys}}^\dagger = (G_{\mathrm{sys}}^HG_{\mathrm{sys}})^{-1}G_{\mathrm{sys}}^H$.
It is easy to check that $G_{\mathrm{sys}}^\dagger= \frac{k}{n}G_{k}G^H$, then the linear reconstruction is given by
\begin{equation}
\begin{aligned}
\bm{x} = G_{\mathrm{sys}}^\dagger \bm{y}
= \frac{k}{n}G_{k}G^H\bm{y}.
\label{eq:G3}
\end{aligned}
\end{equation}
Now, suppose we want to estimate \mbox{\boldmath$x$}
from $\hat{\bm{y}}  = G_{\mathrm{sys}}\bm{x}+ \bm{q}$, where $\bm{q}$ represents quantization error. From \eqref{eq:G3} we obtain
\begin{align}
\hat{\bm{x}} = \frac{k}{n}G_{k}G^H \hat{\bm{y}}=\bm{x}+\frac{k}{n}G_{k}G^H\bm{q}.
\label{eq:G4}
\end{align}
Then, the mean-squared reconstruction error, due to the quantization noise, using a systematic frame can be written as
\begin{equation}
\begin{aligned}
\mathop{\mathrm{MSE_{q}}} &= \frac{1}{k} \mathbb{E}\{\|\hat{\bm{x}} -\bm{ x}\|^2\}  = \frac{1}{k}  \mathbb{E}\{\|G_{\mathrm{sys}}^{\dagger}\bm{q}\|^2\}  \\
&= \frac{1}{k} \mathbb{E}\{\bm{q}^H G_{\mathrm{sys}}^{\dagger H}G_{\mathrm{sys}}^{\dagger}\bm{q}\}  \\
&= \frac{1}{k}  \sigma_q^2 \tr\left(G_{\mathrm{sys}}^{\dagger H}G_{\mathrm{sys}}^{\dagger}\right)\\
&= \frac{k}{n^2} \sigma_q^2 \tr\left(GG_{k}^H G_{k}G^H\right)\\
&= \frac{k}{n^2} \sigma_q^2 \tr\left(G_{k}^H G_{k}G^HG\right)\\
&= \frac{1}{n}  \sigma_q^2 \tr\left(G_{k}^H G_{k}\right)= \frac{k}{n} \sigma_q^2 ,
\label{eq:G6}
\end{aligned}
\end{equation}
where the last step follows because of Lemma~\ref{lem1}. The above analysis indicates that the
MSE is the same for all systematic DFT frames of same size, provided that the effective range codevectors  generated by different $G_{\mathrm{sys}}$ is equal.
This implies a same $\sigma_q^2$ for a given number of quantization levels.
However, for a fixed number of quantization levels, $\sigma_q^2$ depends on the
variance of transmitted codevectors, which, in turn, varies for different systematic frames, as shown in \eqref{eq:vary}.

As we discussed in Section~\ref{sec:eig}, the optimal $G_{\mathrm{sys}}$ is achieved
from the optimization problem \eqref{eq:Omin}.
Similarly, to find the worst $G_{\mathrm{sys}}$, we can {\it maximize} \eqref{eq:Omin} instead of minimizing it.
The optimal eigenvalues are known to be $\lambda_i=1$. But, how can we find the corresponding $G_{\mathrm{sys}}$, or $G_{k}$ equivalently?

%

We approach this problem by studying another optimization problem. By using the Lagrangian method, one can check the optimal arguments of the optimization problem in \eqref{eq:Omin} are equal to those of
\begin{equation}
\begin{aligned}
& \underset{\lambda_i}{\text{maximize}}
& & \prod_{i=1}^k \lambda_i
& \text{s.t.}
& & \sum_{i=1}^k\lambda_i=k, \,\, \lambda_i>0,
\end{aligned}
\label{eq:Omax}
\end{equation}
in which $\{\lambda_i\}_{i=1}^k$ are the eigenvalues of $G_{k}G_{k}^H$ (or $V_{k}^HV_{k}$).
In other words, subject to the above constraints
 \begin{equation}
\begin{aligned}
& \underset{\lambda_i}{\operatorname{argmin}}  \sum_{i=1}^k\frac{1}{\lambda_i} = \underset{\lambda_i}{\operatorname{argmax}} \prod_{i=1}^k \lambda_i. \\
\end{aligned}
\label{eq:Oeq}
\end{equation}
Problem \eqref{eq:Omax} has the maximum of 1 and infimum of 0. Then, considering that $\prod_{i=1}^k \lambda_i = \det (V_{k}^HV_{k})=  \det (G_{k}G_{k}^H)$,
we conclude that the ``best'' submatrix is the one with the largest determinant (possibly 1) and the ``worst'' submatrix is the one with smallest determinant.
Next, we evaluate the determinant of $V_{k}^HV_{k}$ so as to find the matrices corresponding to these extreme cases.

\subsection{The Best and Worst Systematic Frames}
\label{sec:class1}
In this section, we first evaluate the determinate of $WW^H$ where $W$ is the Vandermonde matrix with unit complex entries as defined in
\eqref{eq:Vand2}. From \eqref{eq:detV0} we can write
 \begin{equation}
\begin{aligned}
 \det(WW^H) &= \frac{1}{n^n}\prod_{\substack 1\leq p<q\leq n}{|e^{i\theta_p}-e^{i\theta_q}|^2} \\
  &= \frac{1}{n^n}\prod_{\substack 1\leq p<q\leq n}{4 \sin^2 \frac{\pi}{n}(q - p)} \\
  &= \frac{2^{n(n-1)}}{n^n}\prod_{\substack r=1}^{n-1}{\left( \sin^2 \frac{\pi}{n}r\right)^{n-r}}
 \label{eq:detV}
\end{aligned}
\end{equation}
in which $\theta_x = \frac{2\pi}{n}(x-1),  r=q - p$, and $n(n-1)/2$ is the total number of terms that
satisfy $1\leq p<q\leq n$.
But, we see that $W$ is a DFT matrix, and thus,
its determinant must be 1. Therefore, we have
 \begin{align}
 \prod_{\substack r=1}^{n-1}{\left( \sin^2 \frac{\pi}{n}r\right)^{n-r}}=\frac{n^n}{2^{n(n-1)}}.
 \label{eq:formula}
 \end{align}

%

 The above analysis helps us evaluate the determinant of $V_{k}$  or  $G_{k}$, defined in \eqref{eq:Gsys2}.
Let $\mathcal I_{r}=\{i_{r_1}, i_{r_2}, \hdots, i_{r_k}\}$ be those rows of $G$ used to
build $G_{k}$.  Also, without loss of generality, assume $i_{r_1}< i_{r_2}< \cdots < i_{r_k}$. Clearly, $i_{r_1}\geq 1, i_{r_k}\leq n$, and
we obtain
\begin{equation}
\begin{aligned}
 \det(V_kV_k^H) &= \frac{1}{k^k}\prod_{\substack {1\leq p<q\leq n \\ p, q \in \mathcal I_{r}}}{|e^{i\theta_p}-e^{i\theta_q}|^2} \\
  &= \frac{1}{k^k}\prod_{\substack {1\leq p<q\leq n \\ p, q \in \mathcal I_{r}}}{4 \sin^2 \frac{\pi}{n}(q - p)}.
 \label{eq:detVk}
 \end{aligned}
\end{equation}
Then, since $\sin \frac{\pi}{n}u = \sin \frac{\pi}{n}(n-u) $, one can see that this determinant depends on the
circular distance between rows in $\mathcal I_r$. For a matrix with $n$ rows, we define the circular distance between rows $p$ and $q$
as $\min{\{|q - p|, n-|q - p|\}}$. In this sense, for example, the distance
between rows $1$ and  $n$ is one, i.e., they are circularly successive.
Now, we can see that \eqref{eq:detVk} is minimized when the selected rows are (circularly) successive.
Note that $\sin u$ is strictly increasing for  $u \in [0, \pi/2]$ and the circular distance cannot be greater than $n/2$, in this problem.

In such circumstances where all rows in ${\mathcal I_r}$ are (circularly) successive, \eqref{eq:detVk} is minimal and reduces to
 \begin{align}
 \det(V_kV_k^H) = \frac{2^{k(k-1)}}{k^k}\prod_{\substack r=1}^{k-1}{\left( \sin^2 \frac{\pi}{n}r\right)^{k-r}}.
 \label{eq:detVk2}
 \end{align}
The other extreme case comes up when $n=Mk$ ($M$ is a positive integer) provided that $G_{k}$ consists of every $M$th row of $G$.
In such a case \eqref{eq:detVk} simplifies to $1$ because

\begin{equation}
\begin{aligned}
 \det(V_kV_k^H) &=  \frac{2^{k(k-1)}}{k^k}\prod_{\substack r=1}^{k-1}{\left( \sin^2 \frac{\pi}{n}Mr\right)^{k-r}}  \\
  &=  \frac{2^{k(k-1)}}{k^k}\prod_{\substack r=1}^{k-1}{\left( \sin^2 \frac{\pi}{k}r\right)^{k-r}} =1,
 \label{eq:detVm}
\end{aligned}
\end{equation}
where the last step follows from \eqref{eq:formula}. Recall that this gives the best $V_{k}$ (and equivalently $G_{k}$), in light of \eqref{eq:Omax}.
For such a $G_{k}$, it is easy to see that $G_{\mathrm{sys}}$ stands for a ``tight'' systematic frame and minimizes the MSE for a given number of quantization levels.
Effectively, such a frame is performing {\it integer oversampling}. There are $M$ such frames; they all have the same spectrum, though.

\subsection{Numerical Examples}
\label{sec:num}

Numerical calculations confirm that ``evenly" spaced data samples gives rise to
systematic frames with the best performance.
When a systematic code is doing integer oversampling, 
we end up with tight systematic frames. The first code in Table~\ref{table1}
is an example of this case. When $n\neq Mk$, data samples cannot be equally
spaced; however, as it can be seen from the second code in Table~\ref{table1},
still the best performance is achieved when they are as equally spaced as possible.
Note that, circular shift of codewords pattern does not change the spectrum of corresponding matrices.
For example, in the $(7, 5)$ code, codewords with pattern $\times -\times \times\times -\times $ and $\times\times -\times\times -\times $
have the same properties. Also, reversal of a codeword yields a codeword with similar properties
(e.g., $\times\times -\times - -$ is shifted version of reversed $\times\times --\times   -$).

\begin{table}[!t]
\caption{Eigenvalues structure for two systematic DFT frames with different codeword patterns.
A ``$\times$'' and ``$-$'' respectively represent data (systematic) and parity samples.} \label{table1}
\centering
\scalebox{.87}{
\begin{tabular}{llcccccc}
\toprule %
\multicolumn {2}{c}{$\mathop{\mathrm{Code \quad\quad Codeword }} \qquad\;  $} & $\lambda_{\min}$  & $\lambda_{\max}$ & $\sum_{i=1}^k 1/\lambda_i$ & $\prod_{i=1}^k\lambda_i$ \\
\multicolumn {2}{c}{$\mathop{\mathrm{\qquad\quad\quad\quad  patern}} \qquad\;  $} &   &  &  &  \\ \toprule \toprule %
& $\times\times\times  - - -$  & $0.0572$ & $1.9428$ & $19$ & $0.1111$ &   \\
\cmidrule (r){3-7}
\multirow {2}*{$(6, 3)$}
& $\times\times -\times  - -$ & $0.2546$ & $1.7454$ & $5.5$ & $0.4444$ &   \\
\cmidrule (r){3-7}
& $\times\times --\times   -$ & $0.2546$ & $1.7454$ & $5.5$ & $0.4444$ &   \\
\cmidrule (r){3-7}
& $\times -\times -\times-$ & $1$ & $1$ & $3$ & $1$ &    \\
\midrule
& $\times\times\times \times\times - -$  & $0.0396$ & $1.4$ & $28.70$ & $0.0827$ &   \\
\cmidrule (r){3-7}
\multirow {2}*{$(7, 5)$}
& $\times\times \times\times -\times - $ & $0.1506$ & $1.4$ & $10.32$ & $0.2684$ &   \\
\cmidrule (r){3-7}
& $\times\times -\times\times -\times $ & $0.3110$ & $1.4$ & $7.40$ & $0.4173$ &   \\
\cmidrule (r){3-7}
& $\times -\times \times\times -\times $ & $0.3110$ & $1.4$ & $7.40$ & $0.4173$ &   \\
\bottomrule
\end{tabular}
}

\end{table}

\section{Conclusions}
\label{sec:con}
Systematic DFT frames as well as the approach to make such a frame
out of the generator matrix of a BCH-DFT code has been introduced.
Further, we found the conditions for which a systematic DFT frame 	
can be tight, too.
We then related the performance of these frames to
the position of systematic samples in the codevector. The analysis shows that evenly spaced systematic (parity)
samples result in the minimum reconstruction error, whereas the worst performance is achieved when systematic samples are circularly successive.
Finally, we found the conditions
for which a DFT frame becomes both systematic and tight.

%
%

\end{document}